\newcommand{\Tr}{\mathrm{Tr}\,}
\newtheorem*{proposition}{Proposition}
\newcommand{\bra}[1]{\langle #1|}
\newcommand{\ket}[1]{|#1\rangle}
\newcommand{\braket}[2]{\langle #1|#2\rangle}
\newcommand{\ketbra}[2]{|#1 \rangle \langle #2|}
\begin{document}

\title{A geometric formulation of the uncertainty principle}

\author{G.M. Bosyk $^\dagger$, T.M. Os\'an $^\ddagger$, P.W. Lamberti $^\ddagger$ and M. Portesi $^\dagger$}

\affiliation{$^\dagger$ Instituto de F\'\i{}sica La Plata (IFLP), CONICET, and Departamento de F\'{\i}sica, Facultad de Ciencias Exactas, Universidad Nacional de La Plata, 115 y 49, C.C.~67, 1900 La Plata, Argentina}

\affiliation{$^\ddagger$ Facultad de Matem\'atica, Astronom\'{\i}a y F\'{\i}sica, Universidad Nacional de C\'{o}rdoba and CONICET, Av. Medina Allende s/n, Ciudad Universitaria, X5000HUA C\'ordoba, Argentina}

\begin{abstract}

A geometric approach to formulate the uncertainty principle between quantum observables acting on an $N$-dimensional Hilbert space is proposed. We consider the fidelity between a density operator associated with a quantum system and a projector associated with an observable, and interpret it as the probability of obtaining the outcome corresponding to that projector.
We make use of fidelity-based metrics such as angle, Bures and root-infidelity ones, to propose a measure of uncertainty. The triangle inequality allows us to derive a family of uncertainty relations. In the case of the angle metric, we re-obtain the Landau--Pollak inequality for pure states and show, in a natural way, how to extend it to the case of mixed states in arbitrary dimension. In addition, we derive and compare novel uncertainty relations when using other known fidelity-based metrics.

\pacs{03.65.Ta, 03.65.Aa, 02.50.-r, 03.67.-a}
\keywords{Uncertainty relations, fidelity-based metrics, mixed states}

\end{abstract}

\date{\today}

\maketitle

\section{Introduction}
\label{sec:intro}

 The uncertainty principle is one of the major features of quantum mechanics, establishing a limitation on the predictability of incompatible observables. {\em Uncertainty relations} (URs) constitute the mathematical formulation of this principle. Variance-based URs, such as those of Heisenberg, Robertson and Schr\"odinger~\cite{Hei27Rob29Sch30}, are the most popular ones; they exhibit a state-dependent lower bound for the product of the variances of a pair of non-commuting observables.
 Even though this kind of URs allow us to form mental pictures useful to get insights about quantum theory and also provide means for important quantitative predictions, they do not always capture the essence of the principle, as has been pointed out in Refs.~\cite{Deutsch1983,UffinkThesis}. Accordingly, a variety of alternative formulations have been proposed such as those using higher-order moments~\cite{highmoments}, or information-theoretic measures of ignorance that give entropic uncertainty inequalities~\cite{EURs}. The relevance of the study of URs relies to some extent on the fact that they are useful in several applications of quantum information as well as entanglement detection and quantum cryptography, among many others~\cite{applications}.

 The geometric approach to quantum mechanics plays a fundamental role not only in foundational issues but also in applications of quantum information processing~\cite{GeometryBook}. In order to contribute within this approach, we provide here a geometric formulation of the uncertainty principle. Our proposal is inspired by the Landau--Pollak inequality, which has been introduced in time--frequency analysis~\cite{Landau1961} and later adapted to quantum mechanics by Maassen and Uffink~\cite{Maassen1988}.

 This work is organized as follows: in Sec.~\ref{sec:framework} we establish the theoretical framework, recalling concepts like purification of mixed quantum states and the notion of fidelity along with some of its expressions and properties, summarizing some known fidelity-based metrics, as well as the Landau--Pollak inequality for pure states. All these concepts are used in Sec.~\ref{sec:derivation}, which contains our major contributions: a proof of Landau--Pollak inequality for mixed quantum states, and a geometric derivation of URs. Finally, some conclusions are drawn in Sec.~\ref{sec:conclu}.

\section{Theoretical framework}
\label{sec:framework}

\subsection{Purification of mixed quantum states and fidelity}
\label{ssec:purif&fidelity}

Purification is a procedure which allows us to associate {\em mixed} quantum states with {\em pure} quantum states. This concept emerges from the fact that, given a density matrix $\rho$ acting on a Hilbert space $\mathcal{H}$ and associated with a {\em mixed} quantum state, it is always possible to find a {\em pure} quantum state $\ket{\Psi}$ belonging to an {\em extended} Hilbert space $\mathcal{H} \otimes \mathcal{H}_{\rm aux}$ so that
\begin{equation}
\Tr_{\rm aux} (\ketbra{\Psi}{\Psi}) = \rho\nonumber
\end{equation}
where $\Tr_{\rm aux}$ denotes the {\em partial trace} over the degrees of freedom of the Hilbert space $ \mathcal{H}_{\rm aux}$.

In order to see how to purify a mixed quantum state by means of a mathematical procedure see, for example, Refs.~\cite{Jozsa,Nielsen&ChuangBook}.  In this case, it can be seen that purification is not unique. It is also important to note that purification is not an abstract concept beyond physical reality. For example, in Ref.~\cite{Bassi}, a discussion regarding physical purifications of a given ensemble can be found.

The {\em fidelity} $F(\rho,\sigma)$ is a well known measure of {\em similarity} between two quantum states represented by density matrices $\rho$ and $\sigma$. This quantity is defined as~\cite{Uhlmann,Jozsa}:
\begin{equation}
\label{fidelity}
F(\rho,\sigma)=\left(\Tr\sqrt{\sqrt{\rho} \sigma \sqrt{\rho}} \right)^2 .
\end{equation}
An equivalent definition of $F$ can be provided in terms of purifications of the states $\rho$ and $\sigma$~\cite{Jozsa}. Thus, we can write
\begin{equation}
\label{UJFdef1}
F(\rho, \sigma) = \max_{\ket{\psi},\ket{\varphi}}\left|\braket{\psi}{\varphi}\right|^2
\end{equation}
where maximization is performed over {\em all} purifications $\ket{\psi}$ of $\rho$ and $\ket{\varphi}$ of $\sigma$. 
Another equivalent definition of $F$ in terms of purifications of the states $\rho$ and $\sigma$ is~\cite{Jozsa}
\begin{equation}
\label{UJFdef2}
F(\rho, \sigma) = \max_{\ket{\varphi}}\left|\braket{\psi}{\varphi}\right|^2
\end{equation}
where $\ket{\psi}$ is any {\em fixed} purification of $\rho$ and maximization is performed over {\em all} purifications $\ket{\varphi}$ of $\sigma$.

It can be shown that fidelity satisfies the following properties~\cite{Nielsen&ChuangBook,Jozsa}:
\begin{enumerate}
\item \label{FP1} {\em Normalization:}
\begin{equation}
0 \leq F(\rho, \sigma) \leq 1
\end{equation}
\item \label{FP2} {\em Identity of indiscernibles:}
\begin{equation}
F(\rho, \sigma) = 1 \:\:{\rm if\:\:and\:\:only\:\:if}\:\:\rho = \sigma
\end{equation}
\item \label{FP3} {\em Symmetry:}
\begin{equation}
F(\rho, \sigma) = F(\sigma, \rho)
\end{equation}
\item \label{FP4} If $\rho = \ketbra{\psi}{\psi}$ and $\sigma = \ketbra{\varphi}{\varphi}$ represent pure states, fidelity reduces to
\begin{equation}
\label{fidelity_pure}
F(\ketbra{\psi}{\psi},\ketbra{\varphi}{\varphi}) = \left|\braket{\psi}{\varphi} \right|^2
\end{equation}
\item \label{FP5} If $\rho = \ketbra{\psi}{\psi}$ represents a pure state then $F(\ketbra{\psi}{\psi}, \sigma) = \bra{\psi}\sigma\ket{\psi} = \Tr(\rho \,\sigma)$
\end{enumerate}

For a more complete list of fidelity properties see, for example, Refs.~\cite{Nielsen&ChuangBook,Jozsa,OLPRA2013}.

\subsection{Fidelity-based metrics}
\label{ssec:f-metrics}

The fidelity $F$ serves as a generalized measure of similarity between two quantum states but is important to notice that $F$ is not a metric. Nevertheless, it can be used to define metrics on the space of quantum states. We recall that a (true) metric is a function $d(\rho,\sigma)$ that verifies the following properties:
\begin{enumerate}[i.]
\item {\em non-negativity}: $d(\rho,\sigma)\geq 0$, and $d(\rho,\sigma)=0$ if and only if $\rho=\sigma$,
\item {\em symmetry}: $d(\rho,\sigma) = d(\sigma,\rho)$,
\item {\em triangle inequality}: $d(\sigma,\rho) + d(\tau,\rho) \geq d(\sigma,\tau)$.
\end{enumerate}

In what follows, we consider some fidelity-based metrics that will be used in Sec.~\ref{sec:derivation} to derive URs.

\begin{itemize}

\item {Angle metric:}

The angle metric~\cite{Nielsen&ChuangBook,GeometryBook} can be written in terms of fidelity as
\begin{equation}
\label{eq:anglemetric}
d_{\rm A}(\rho,\sigma) = \arccos \sqrt{ F(\rho,\sigma)}.
\end{equation}
In the case of two pure states $\ket{\psi}$ and $\ket{\varphi}$ the angle metric reduces to the Wootters' metric~\cite{Wootters1981,Majtey2005}
\begin{equation}
d_{\rm W}(\ket{\varphi},\ket{\psi}) = \arccos|\braket{\varphi}{\psi}| .
\label{eq:Wmetric}
\end{equation}

\item{Bures metric:}

The Bures metric \cite{Bures1969,GeometryBook} written in terms of fidelity takes the form
\begin{equation}
\label{eq:buresmetric}
d_{\rm B} (\rho,\sigma) = \sqrt{2-2 \sqrt{F(\rho,\sigma)}}.
\end{equation}

\item{Root-infidelity metric:}

The root-infidelity metric proposed by Gilchrist \textit{et al.} in~\cite{Gilchrist2005} can be written in terms of fidelity as
\begin{equation}
\label{eq:cmetric}
 d_{\rm RI} (\rho,\sigma) = \sqrt{ 1- F(\rho,\sigma)}.
\end{equation}

\end{itemize}

\subsection{Landau--Pollak inequality for pure states}
\label{ssec:LPineqPS}

Let us consider a quantum system with states belonging to an $N$-dimensional Hilbert space $\mathcal{H}$ and two observables $\mathcal{A}$ and $\mathcal{B}$, with discrete non-degenerate spectra, acting on $\mathcal{H}$. Let $\{|a_i\rangle\}$ and $\{|b_j\rangle\}$ be the eigenbasis of $\mathcal{A}$ and $\mathcal{B}$, respectively, and $\ket{\Psi}$ be a pure state of the quantum system. Then, the Landau--Pollak inequality~(LPI) reads~\cite{UffinkThesis}
\begin{equation}
 \arccos \sqrt{P_{\mathcal{A};\Psi}} + \arccos \sqrt{P_{\mathcal{B};\Psi}} \geq \arccos c \label{eq:LP}
\end{equation}
where $\ P_{\mathcal{A};\Psi}\equiv\max_i p_i(\mathcal{A};\Psi)=\max_i |\langle a_i |\Psi \rangle|^2\,\in\,\left[\frac 1N,1\right]$ (and analogously for $\mathcal{B}$), and \ $c\equiv\max_{ij} |\langle a_i| b_j \rangle|\,\in\,\left[\frac{1}{\sqrt N},1\right]$ is the so-called \textit{overlap} between the eigenbasis of the two observables. The LPI~\eqref{eq:LP} is indeed an alternative formulation of the uncertainty principle for pure states~\cite{UffinkThesis}. Furthermore, it has been used to obtain entanglement criteria~\cite{LPent} and to improve the Maassen--Uffink entropic UR~\cite{deVicente2008}.
Now, we assert that LPI~\eqref{eq:LP} is nothing but a consequence of the triangle inequality verified by Wootters' metric~\eqref{eq:Wmetric}. Indeed, this inequality for the case of two arbitrary eigenstates $\ket{a_i}$, $\ket{b_j}$ of $\mathcal{A}$ and $\mathcal{B}$ respectively, and a given state $\ket{\Psi}$, reads
\begin{equation}
\arccos |\braket{a_i}{\Psi}| + \arccos |\braket{b_j}{\Psi}| \geq \arccos(|\braket{a_i}{b_j}|).
\end{equation}
Choosing in particular those indices $i_{max}$ and $j_{max}$ that correspond to $\max_i p_i(\mathcal{A};\Psi)$ and $\max_j p_j(\mathcal{B};\Psi)$, we have
\begin{equation}
 \arccos \sqrt{P_{\mathcal{A};\Psi}} + \arccos \sqrt{P_{\mathcal{B};\Psi}} \geq \arccos |\braket{a_{i_{max}}}{b_{j_{max}}}|.
\end{equation}
Thereby, taking into account that $|\braket{a_{i_{max}}}{b_{j_{max}}}| \leq \max_{i,j} |\braket{a_i}{b_j}| = c$ and $\arccos x$ is a decreasing function of its argument, it is straightforward to obtain LPI~\eqref{eq:LP}.

In Sec.~\ref{sec:derivation}, with the purpose of obtainig URs valid for pure as well as for mixed quantum states, we will extend the lines of the reasoning used here to derive the inequality~\eqref{eq:LP} to other metrics commonly used in quantum mechanics.

\section{Geometric derivation of uncertainty relations}
\label{sec:derivation}

\subsection{Landau--Pollak inequality for mixed states}
\label{ssec:LPineqMS}

Here we derive an extension of the LPI, Eq. \eqref{eq:LP}, for mixed quantum states acting on a Hilbert space of arbitrary dimension. Even though the metric character of the angle metric is well known (see for example Ref.~\cite{Nielsen&ChuangBook}) for the purpose of this work we find enlighting to prove it by using the concept of purification introduced in Sec.~\ref{ssec:purif&fidelity}.
Let us consider the density matrices $\rho$, $\sigma$ and $\tau$ associated with three mixed states, and let $\ket{r}$, $\ket{s}$ and $\ket{t}$ be {\em arbitrary} purifications of $\rho$, $\sigma$ and $\tau$, respectively. Then, due to the metric character of Wotters' distance $d_{\rm W}$ for pure states, the triangle inequality reads
\begin{equation}
\label{eq:LPImix1}
\arccos |\braket{s}{r}| + \arccos |\braket{t}{r}| \geq \arccos |\braket{s}{t}| .
\end{equation}
We now show that, when choosing adequately the purifications corresponding to $\sigma$ and $\tau$, each term on the left hand-side can be written in terms of the fidelity between the corresponding density matrix and $\rho$. Indeed, for any given purification $\ket{r}$ of $\rho$, we select those purifications of $\sigma$ and $\tau$ that maximize the overlaps $|\braket{s}{r}|$ and $|\braket{t}{r}|$ independently. Recalling Eq.~\eqref{UJFdef2}, this is equivalent to calculate $\sqrt{F(\sigma,\rho)}$ and $\sqrt{F(\tau,\rho)}$, respectively. Thus, we obtain
\begin{equation}
\label{eq:LPImix2}
\arccos \sqrt{F(\sigma,\rho)} + \arccos \sqrt{F(\tau,\rho)} \geq \arccos |\braket{\tilde{s}}{\tilde{t}}|
\end{equation}
where $\ket{\tilde{s}}$ and $\ket{\tilde{t}}$ denote the purifications that maximize the overlaps.
Now, due to arccosine is a decreasing function, we have
\begin{equation}
\label{eq:LPImix3}
\arccos |\braket{\tilde{s}}{\tilde{t}}| \geq \arccos \sqrt{F(\sigma,\tau)} ,
\end{equation}
which combined with~\eqref{eq:LPImix2} gives
\begin{equation}
\label{eq:LPImix4}
\arccos \sqrt{F(\sigma,\rho)} + \arccos \sqrt{F(\tau,\rho)} \geq \arccos \sqrt{F(\sigma,\tau)} .
\end{equation}
This completes the proof of the triangle inequality for the angle metric $d_\textrm{A}$.

Now, we use Eq.~\eqref{eq:LPImix4} when $\sigma=\Pi^{\mathcal{A}}_i=\ketbra{a_i}{a_i}$ \ and \ $\tau=\Pi^{\mathcal{B}}_j=\ketbra{b_j}{b_j}$. Here, the operator $\Pi^{\mathcal{A}}_i$ represents the rank-one projector associated with the $i$th outcome of $\mathcal{A}$. Thus, for a system with density matrix $\rho$ one has
\begin{equation}
\label{eq:Fprob}
F(\Pi^\mathcal{A}_i,\rho)= \Tr (\Pi^\mathcal{A}_i \rho) = p_i(\mathcal{A};\rho) .
\end{equation}
An analogous relation holds for the projector correponding to the $j$th outcome of $\mathcal{B}$.
Following the same lines of reasoning that we developed to demonstrate LPI in Sec.~\ref{ssec:LPineqPS}, we find
\begin{equation}
\label{eq:LPImix7}
\arccos \sqrt{P_{\mathcal{A};\rho}} + \arccos \sqrt{P_{\mathcal{B};\rho}} \geq \arccos c
\end{equation}
where $P_{\mathcal{A};\rho} = \max_i p_i(\mathcal{A};\rho)\,\in\,\left[\frac 1N,1\right]$ (an analogous relation holds for $\mathcal{B}$), and the overlap in terms of the projectors reads $c = \max_{i,j}\sqrt{\Tr(\Pi^{\mathcal{A}}_i \Pi^{\mathcal{B}}_j)}$. Inequality~\eqref{eq:LPImix7} is an UR and it is the natural extension of LPI~\eqref{eq:LP} for the case of mixed states in a Hilbert space of arbitrary dimension. To the best of our knowledge, this result was only formally proved for quantum states belonging to a Hilbert space of dimension~2~\cite{deVicenteThesis}. The extension of LPI for mixed states in arbitrary dimensions is one of the most important results of the present work.

\subsection{A family of uncertainty relations based on fidelity}
\label{ssec:familyURs}

We now show that a family of URs can be established in terms of \emph{fidelity-based metrics}. The procedure to obtain these URs is inspired in the one followed to obtain inequality~\eqref{eq:LPImix7}.

Let us start from fidelity-based metrics of the form
\begin{equation}
d(\rho,\sigma)= f\left(F(\rho,\sigma)\right)
\label{eq:fidelitymetric}
\end{equation}
where $f(x)$ is a {\em decreasing} function for $x\in[0,1]$, with $f(x)=0 \ \mathrm{iff} \ x=1$. Now, recalling the link between fidelity and probability given in Eq. ~\eqref{eq:Fprob}, we propose as uncertainty measure for an observable $\mathcal{A}$ the following quantity
\begin{equation}\label{eq:uncmmeasure}
\mathcal{U}(\mathcal{A};\rho) = f\left( P_{\mathcal{A};\rho} \right) .
\end{equation}
Notice that indeed this is a reasonable measure of uncertainty as it satisfies, by definition, the following basic properties:
\begin{enumerate}
\item $\mathcal{U}(\mathcal{A};\rho) \geq 0$,
\item $\mathcal{U}(\mathcal{A};\rho)$ is decreasing in terms of $P_{\mathcal{A};\rho}$, that is, uncertainty decreases when one has more certainty about the predictability of $\mathcal{A}$,
\item $\mathcal{U}(\mathcal{A};\rho) = 0$ if and only if $P_{\mathcal{A};\rho}=1$, that is, uncertainty vanishes only when one has certainty about the predictability of $\mathcal{A}$, and
\item the maximum of $\mathcal{U}(\mathcal{A};\rho)$ is attained at $P_{\mathcal{A};\rho}=\frac{1}{N}$, which leads to the uniform distribution, that is, uncertainty is maximum only when one has complete ignorance about the predictability of $\mathcal{A}$.
\end{enumerate}

We now give our main result in the following proposition that establishes a geometric formulation of uncertainty principle:
\begin{proposition} 
Let $\mathcal{A}$ and $\mathcal{B}$ be two observables with discrete non-degenerate spectra acting on an $N$-dimensional Hilbert space. Consider a quantum system described by a density operator $\rho$, and an uncertainty measure associated with the observables given in the form of Eq.~\eqref{eq:uncmmeasure}. Then, the following UR holds:
\begin{equation}\label{eq:URgeneral}
\mathcal{U}(\mathcal{A};\rho) + \mathcal{U}( {\mathcal{B};\rho} ) \geq f( c^2 )
\end{equation}
where $c^2 = \max_{i,j} \Tr (\Pi^{\mathcal{A}}_i \Pi^{\mathcal{B}}_j)$, being $\Pi^{\mathcal{A}}_i$ and $\Pi^{\mathcal{B}}_j$ the rank-one projectors associated with the $i$th outcome of $\mathcal{A}$ and the $j$th outcome of $\mathcal{B}$, respectively.

\label{prop:Urgeneral}
\end{proposition}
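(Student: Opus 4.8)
The plan is to transcribe, almost verbatim, the reasoning that led to the mixed-state Landau--Pollak inequality~\eqref{eq:LPImix7}, replacing the specific choice $f(x)=\arccos\sqrt{x}$ underlying the angle metric by the generic decreasing function $f$ of Eq.~\eqref{eq:fidelitymetric}. The key observation is that the proposed uncertainty measure~\eqref{eq:uncmmeasure} is precisely the $d$-distance from $\rho$ to the rank-one projector of $\mathcal{A}$ that lies \emph{closest} to $\rho$ in that metric, so the bound will follow from a single application of the triangle inequality.

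First I would recast $\mathcal{U}(\mathcal{A};\rho)=f(P_{\mathcal{A};\rho})$ geometrically. Using the link between fidelity and probability, Eq.~\eqref{eq:Fprob}, one has $P_{\mathcal{A};\rho}=\max_i p_i(\mathcal{A};\rho)=\max_i F(\Pi^{\mathcal{A}}_i,\rho)$. Since $f$ is decreasing on $[0,1]$, monotonicity converts this maximisation into a minimisation, $\mathcal{U}(\mathcal{A};\rho)=\min_i f\!\left(F(\Pi^{\mathcal{A}}_i,\rho)\right)=\min_i d(\Pi^{\mathcal{A}}_i,\rho)=d(\Pi^{\mathcal{A}}_{i_{\max}},\rho)$, where $i_{\max}$ labels the outcome realising $P_{\mathcal{A};\rho}$; the identical manipulation gives $\mathcal{U}(\mathcal{B};\rho)=d(\Pi^{\mathcal{B}}_{j_{\max}},\rho)$.

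Next I would apply the triangle inequality for $d$ to the triple $(\Pi^{\mathcal{A}}_{i_{\max}},\rho,\Pi^{\mathcal{B}}_{j_{\max}})$, exactly as in Eq.~\eqref{eq:LPImix4}, to get $\mathcal{U}(\mathcal{A};\rho)+\mathcal{U}(\mathcal{B};\rho)\geq d(\Pi^{\mathcal{A}}_{i_{\max}},\Pi^{\mathcal{B}}_{j_{\max}})$. Because both arguments are pure-state projectors, property~\ref{FP4} gives $F(\Pi^{\mathcal{A}}_{i_{\max}},\Pi^{\mathcal{B}}_{j_{\max}})=\Tr(\Pi^{\mathcal{A}}_{i_{\max}}\Pi^{\mathcal{B}}_{j_{\max}})\leq c^2$, and invoking once more that $f$ is decreasing yields $d(\Pi^{\mathcal{A}}_{i_{\max}},\Pi^{\mathcal{B}}_{j_{\max}})=f\!\left(F(\Pi^{\mathcal{A}}_{i_{\max}},\Pi^{\mathcal{B}}_{j_{\max}})\right)\geq f(c^2)$. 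Chaining the two inequalities delivers~\eqref{eq:URgeneral}.

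The hard part is not this chain of monotonicity and triangle-inequality steps, which is routine, but the standing assumption that $d=f(F)$ genuinely \emph{is} a metric, i.e.\ that it obeys the triangle inequality invoked above. A generic decreasing $f$ with $f(x)=0$ iff $x=1$ need not produce a metric, so the proposition must be understood as applying to those $f$ for which $d$ is a true distance. For the angle metric this was secured by the purification argument of Sec.~\ref{ssec:LPineqMS}, while for the Bures and root-infidelity metrics it follows from the known metric character of Eqs.~\eqref{eq:buresmetric} and~\eqref{eq:cmetric}; checking (or postulating) this property is the only non-mechanical ingredient of the argument.
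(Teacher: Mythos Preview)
Your argument is correct and follows essentially the same route as the paper: apply the triangle inequality for the fidelity-based metric to the triple $(\Pi^{\mathcal{A}}_{i_{\max}},\rho,\Pi^{\mathcal{B}}_{j_{\max}})$, identify the two legs as $\mathcal{U}(\mathcal{A};\rho)$ and $\mathcal{U}(\mathcal{B};\rho)$ via Eq.~\eqref{eq:Fprob}, and then bound the third side using $\Tr(\Pi^{\mathcal{A}}_{i_{\max}}\Pi^{\mathcal{B}}_{j_{\max}})\leq c^2$ together with the monotonicity of $f$. Your closing caveat that the triangle inequality requires $d=f(F)$ to be a genuine metric is well taken and matches the paper's standing hypothesis in Eq.~\eqref{eq:fidelitymetric}.
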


\begin{proof}
The triangle inequality fulfilled by~\eqref{eq:fidelitymetric} applied to the triplet $\Pi^{\mathcal{A}}_i$, $\Pi^{\mathcal{B}}_j$, $\rho$, leads to
\begin{equation} \label{eq:triangle}
 f\left( p_i(\mathcal{A};\rho) \right) + f\left( p_j(\mathcal{B};\rho) \right) \geq
 f\left( \Tr (\Pi^{\mathcal{A}}_i \Pi^{\mathcal{B}}_j) \right)
\nonumber
\end{equation}
where we made use of~\eqref{eq:Fprob}.
Note that this inequality is valid for \emph{any} pair of indices $i,j$. In particular, one can choose (separately) the indices $i_{\max}$ and $j_{\max}$ that correspond to the maximum probabilities $P_{\mathcal{A};\rho}$ and $P_{\mathcal{B};\rho}$, respectively. Using~\eqref{eq:uncmmeasure} we arrive to
\begin{equation} \label{eq:triangle2}
 \mathcal{U}(\mathcal{A};\rho) + \mathcal{U}(\mathcal{B};\rho) \geq
 f\left( \Tr (\Pi^{\mathcal{A}}_{i_{\max}} \Pi^{\mathcal{B}}_{j_{\max}}) \right),
\nonumber
\end{equation}
The proof concludes by using the fact that $\Tr (\Pi^{\mathcal{A}}_{i_{\max}} \Pi^{\mathcal{B}}_{j_{\max}}) \leq \max_{i,j} \Tr (\Pi^{\mathcal{A}}_i \Pi^{\mathcal{B}}_j)$ and that $f$ is decreasing.

\end{proof}

We remark that, regardless of the explicit form of the uncertainty measure $\mathcal{U}$, our formulation captures the essence of the uncertainty principle, in the sense discussed in Refs.~\cite{Deutsch1983,UffinkThesis}, due to the following reasons:
\begin{itemize}
\item[-] the lower bound to the uncertainty-sum is universal, that is, it is state-independent,
\item[-] when $c < 1$ the UR given by Eq.~\eqref{eq:URgeneral} is non-trivial, that is, the uncertainty-sum of the observables is strictly greater than zero, and
\item[-] when $c=\frac{1}{\sqrt N}$ (complementary observables), certainty associated with one observable implies maximum ignorance about the other. 
\end{itemize}
Furthermore, Eq.~\eqref{eq:URgeneral} represents in fact a family of URs.

\subsection{Comparisons among some uncertainty relations}
\label{ssec:comparison}

To conclude this work, we particularize the UR given by Eq.~\eqref{eq:URgeneral} for the case of the fidelity-based metrics introduced in Sec.~\ref{ssec:f-metrics}, and we compare the concomitant URs. In order to facilitate the comparison, we include the results previously obtained for the angle metric in addition to the new results obtained for Bures and root-infidelity metrics.

\begin{itemize}

\item {Angle metric:}\label{sssec:Ametric}

From Eqs.~\eqref{eq:anglemetric}, \eqref{eq:fidelitymetric} and \eqref{eq:uncmmeasure}, the
corresponding uncertainty measure is \ $\mathcal{U}_{\rm A}(\mathcal{O};\rho )= \arccos \sqrt{P_{\mathcal{O};\rho}}$ \ for $\mathcal{O}=\mathcal{A},\mathcal{B}$. Therefore, we reobtain inequality~\eqref{eq:LPImix7}:
\begin{equation}
 \arccos \sqrt{P_{\mathcal{A};\rho}} + \arccos \sqrt{P_{\mathcal{B};\rho}} \geq \arccos c.
 \label{eq:LPUR}
\end{equation}
Again we observe that this inequality is the natural extension of LPI~\eqref{eq:LP} to the case of mixed states belonging to a Hilbert space of {\em arbitrary dimension}. To the best of our knowledge, this result has been proved {\em only} for 2-dimensional states~\cite{deVicenteThesis}, where the fact that the uncertainty measure $\mathcal{U}_{\rm A}(\mathcal{O};\rho )$ is concave in terms of $\rho$ is crucial for the demonstration. However, that argument is not applicable to the case $N>2$ since the uncertainty measure looses the concavity property.

\item{Bures metric:}\label{sssec:Bmetric}

{}From Eqs.~\eqref{eq:buresmetric}, \eqref{eq:fidelitymetric} and \eqref{eq:uncmmeasure}, the corresponding uncertainty measure is \ $\mathcal{U}_{\rm B}(\mathcal{O};\rho )= \sqrt{2 - 2\sqrt{ P_{\mathcal{O};\rho}}}$ \ for $\mathcal{O}=\mathcal{A},\mathcal{B}$. Therefore, the following UR holds
\begin{equation}
\sqrt{1 - \sqrt{P_{\mathcal{A};\rho}}} + \sqrt{1 - \sqrt{P_{\mathcal{B};\rho}}} \geq \sqrt{1 - c}.
\label{eq:BuresUR}
\end{equation}

\item{Root-infidelity metric:}\label{sssec:RImetric}

{}From Eqs.~\eqref{eq:cmetric}, \eqref{eq:fidelitymetric} and \eqref{eq:uncmmeasure}, the corresponding uncertainty measure is \ $\mathcal{U}_{\rm RI}(\mathcal{O};\rho)= \sqrt{1 - P_{\mathcal{O};\rho}}$ \ for $\mathcal{O}=\mathcal{A},\mathcal{B}$. Therefore, the following UR holds
\begin{equation}
\sqrt{ 1- P_{\mathcal{A};\rho}} + \sqrt{ 1- P_{\mathcal{B};\rho}} \geq \sqrt{ 1- c^2}. \label{eq:CUR}
\end{equation}

\end{itemize}

\hfill

Now, let us compare the three URs~\eqref{eq:LPUR}, \eqref{eq:BuresUR} and~\eqref{eq:CUR} by focusing on the set of values of $P_{\mathcal{A};\rho}$ and $P_{\mathcal{B};\rho}$ allowed in each case. With this purpose, we define
\begin{equation}
\label{eq:domain}
\mathcal{D}_{\lambda,c}= \left\{ \left(P_{\mathcal{A};\rho},P_{\mathcal{B};\rho} \right) \in \left[\frac{1}{N},1\right] \times \left[\frac{1}{N},1\right]: \ P_{\mathcal{B};\rho} \leq g_{\lambda,c}(P_{\mathcal{A};\rho}) \right\}
\end{equation}
where $\lambda= {\rm A}$ (for angle metric), ${\rm B}$ (for Bures metric), or ${\rm RI}$ (for root-infidelity metric), and
\begin{equation}
g_{\lambda,c}(P)=\left\{
\begin{array}{ll}
1 & \mathrm{if} \ \frac 1N \leq P \leq c^2 \\
h_{\lambda,c}(P) & \mathrm{if} \ c^2\leq P\leq 1
\end{array}
\right.
\label{eq:gdomain}
\end{equation}
with $h_{\lambda,c}$ given in Table~\ref{tab:table2} (see App.~\ref{appendix}).
\begin{table}[ht]
\begin{center}
\begin{tabular}{cc}
\hline \hline
 Metric & $h_{\lambda,c}(P)$
 \\
 \hline \hline
 $d_{\rm A}$ (Angle) & $\left( \sqrt{1- P} \sqrt{1-c^2} + c \sqrt{P} \right)^2$
 \\
 $d_{\rm B}$ (Bures) & $\left( \sqrt{P}+ 2\sqrt{1 - \sqrt{P}} \sqrt{1-c} + c - 1 \right)^2$
 \\
$d_{\rm RI}$ (Root-infidelity) & $P + 2\sqrt{1- P} \sqrt{1-c^2} +c^2-1$
 \\
 \hline
\end{tabular}
\end{center}
\caption{Functions $h_{\lambda,c}$.}
\label{tab:table2}
\end{table}

It can be seen that the following ordering among the sets holds
\begin{equation}
\mathcal{D}_{\textrm{A},c} \subseteq \mathcal{D}_{\textrm{B},c} \subseteq \mathcal{D}_{\textrm{RI},c}
\end{equation}
for every $c$.
This implies that inequality~\eqref{eq:LPImix7}, derived for the angle metric, is the tightest one. However, two limiting cases arise where the three sets are equal.
One case is that in which one has certainty about one observable: if $P_{\mathcal{A};\rho}=1$, then $P_{\mathcal{B};\rho} \leq g_{\lambda,c}(1) = c^2$ for any metric (and analogously interchanging $\mathcal{A}$ and $\mathcal{B}$). The other case is trivial: when $c=1$, the three sets are the whole square $\left[\frac{1}{N},1\right] \times \left[\frac{1}{N},1\right]$, that is, there is no restriction coming from URs.
To illustrate these results, we show in Fig.~\ref{f:fig1} the sets given by Eq.~\eqref{eq:domain} for typical values of the overlap~$c$.

\begin{figure}[htbp]
\begin{center}
\includegraphics[width=15cm]{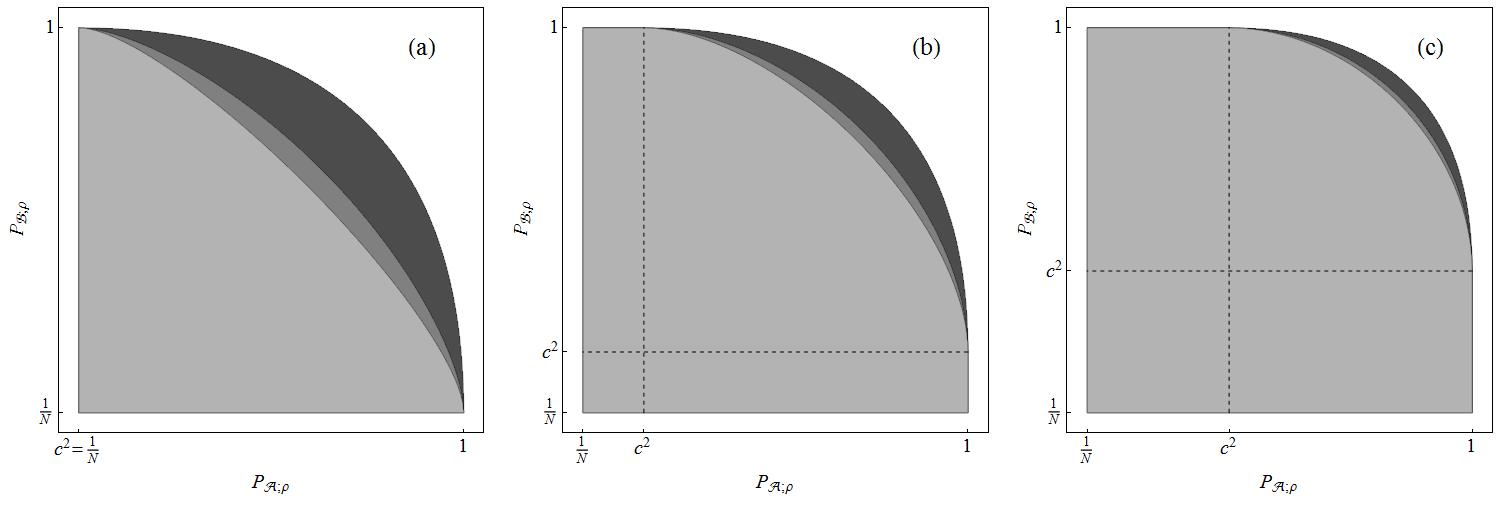}
\end{center}
\caption{Plots of the sets $\mathcal{D}_{\lambda,c}$ given in Eq.~\eqref{eq:domain} where $\lambda$ refers to angle metric (A, in light gray), Bures metric (B, in gray) and root-infidelity metric (RI, in dark gray), when $N=20$, for: (a)~$c=\frac{1}{\sqrt N}$, (b)~$c = \sqrt{0.2}$, and (c)~$c=\sqrt{0.4}$.}
\label{f:fig1}
\end{figure}

\section{Concluding remarks}
\label{sec:conclu}

In this work, we present a geometric approach to formulate the uncertainty principle. We obtain a family of uncertainty relations, Eq.~\eqref{eq:URgeneral}, that depend on fidelity-based metrics. In particular, when we make use of the angle metric between quantum sates, a natural generalization of the Landau--Pollak inequality to mixed states in arbitrary dimensions is obtained, i.e.\ Eq.~\eqref{eq:LPUR}.
In addition we find two novel uncertainty relations, Eqs.~\eqref{eq:BuresUR} and~\eqref{eq:CUR}, derived from Bures and root-infidelity metric, respectively. These relations are seen to be weaker than Landau--Pollak inequality. As a consequence of these findings, it rises up the question whether the angle metric leads to the tightest uncertainty relation  when compared to an uncertainty relation derived from any {\em arbitrary} fidelity-based metric, within our approach. This observation deserves  further study. In addition, it remains open the problem of how to extend our main result~\eqref{eq:URgeneral} to the most general case of Positive-Operator Valued Measures.

\acknowledgments

TMO, PWL and MP are members of the Consejo Nacional de Investigaciones Cient\'{\i}ficas y T\'ecnicas (CONICET), Argentina. GMB is a fellow of CONICET. GMB and MP acknowledge Secretar\'{\i}a de Pol\'{\i}ticas Universitarias for financial support through the mobility program Inter-U, and acknowledge warm hospitality at FaMAF, UNC. PWL and TMO are grateful to Secretar\'{\i}a de Ciencia y T\'ecnica, Universidad Nacional de C\'ordoba (SECyT-UNC) for financial support.


\appendix
\section{Set of allowed values of $P_{\mathcal{A},\rho}$ and $P_{\mathcal{B},\rho}$}
\label{appendix}

In this appendix we show how to obtain the sets~\eqref{eq:domain} of allowed values of $P_{\mathcal{A};\rho}$ and $P_{\mathcal{B};\rho}$ for each metric. After some algebra, URs~\eqref{eq:LPUR}, \eqref{eq:BuresUR} and~\eqref{eq:CUR} can be written as,
\begin{equation}
 \xi^2 + a_1 \xi + a_0 \geq 0
\end{equation}
where $\xi$, $a_1$ and $a_0$ are given in Table~\ref{tab:table1} for each metric.
\begin{table}[ht]
\begin{center}
\begin{tabular}{cccc}
\hline \hline
 Metric & $\xi$ & $a_1$ & $a_0$
 \\
 \hline \hline
$d_{\rm A}$ (Angle) & $\sqrt{1- P_{\mathcal{B};\rho}}$ & $2c \sqrt{1-P_{\mathcal{A};\rho}}$ & $c^2 - P_{\mathcal{A};\rho}$
 \\
$d_{\rm B}$ (Bures) & $\sqrt{2-2\sqrt{P_{\mathcal{B};\rho}}}$ & $2\sqrt{2-2\sqrt{P_{\mathcal{A};\rho}}}$ & $2(c-\sqrt{P_{\mathcal{B};\rho}})$
 \\
$d_{\rm RI}$ (Root-infidelity) & $\sqrt{1- P_{\mathcal{B};\rho}}$ & $2 \sqrt{1-P_{\mathcal{A};\rho}}$ & $c^2 - P_{\mathcal{A};\rho}$
\\
\hline
\end{tabular}
\end{center}
\caption{Values of the variable $\xi$ and the coefficients of the quadratic polynomial for each metric.}
\label{tab:table1}
\end{table}
The quadratic polynomial corresponding to each metric has two roots $\xi_\pm$ being $\xi_-$ always negative. Thus, two cases arise. If $P_{\mathcal{A};\rho} \leq c^2$ then $\xi_+ \leq 0$ and $P_{\mathcal{B};\rho} \leq 1$, otherwise $\xi_+ \geq 0$ and $P_{\mathcal{B};\rho} \leq h_c\left( P_{\mathcal{A};\rho} \right)$ with $h_{\lambda,c}$ given in Table~\ref{tab:table2}.



\end{document}